\declaretheorem[style=definition]{definition} 
\declaretheorem[style=plain]{lemma}
\declaretheorem[style=plain]{theorem}
\declaretheorem[style=plain]{example}
\newcommand{\R}{\mathbb{R}}
\newcommand{\Rpos}{\R_{\geq 0}}
\newcommand{\Q}{\mathbb{Q}}
\newcommand{\Qpos}{\Q_{\geq 0}}
\newcommand{\Reps}{R^{\varepsilon}}
\newcommand{\mcal}{\mathcal}
\newcommand{\veps}{\varepsilon}
\newcommand{\dfunc}{\Rpos^{S \times S}}
\newcommand{\dfuncparsem}{\mcal{D}}
\newcommand{\dfuncparsyn}{\dfuncparsem_{\expr}}
\newcommand{\distparsyn}{d_{\expr}}
\newcommand\minfix{\,\mathrel{\overset{\makebox[0pt]{\mbox{\normalfont\tiny\sffamily $\min$}}}{=}}\,}
\newcommand{\simeps}[1][]{
\ifthenelse{\equal{#1}{}}{\leq_{\veps}}{\leq_{#1}}%
}
\newcommand{\expr}{\mcal{E}}
\newcommand{\equivexpr}{\equiv_\expr}
\newcommand{\distsynorder}{\sqsubseteq_{\mcal{E}}}
\newcommand{\exporder}{\distsynorder}
\newcommand{\minE}{\texttt{MIN}}
\newcommand{\maxE}{\texttt{MAX}}
\newcommand{\WKS}{\mcal{K}=(S,\atom,\lab,\to)}
\newcommand{\PWKS}{\mcal{K_\param}=(S,\atom,\lab,\to)}
\newcommand{\EKS}{\mcal{K}_{\param}^{v}=(S,\atom,\lab,\to_v)}
\newcommand{\lab}{\mcal{L}}
\newcommand{\atom}{AP}
\newcommand{\param}{\mcal{P}}
\newcommand{\eval}{\mcal{V}}
\newcommand{\ffunc}{\mcal{F}}
\newcommand{\ffuncpar}{\ffunc_{\expr}}
\title{Weighted Branching Simulation Distance for Parametric Weighted Kripke Structures}
\author{Louise Foshammer, Kim Guldstrand Larsen and Anders Mariegaard
\institute{Department of Computer Science, Aalborg University\\
Selam Lagerl\"{o}fs Vej 300, DK-9220 Aalborg, Denmark.}
\email{$\{$foshammer,kgl,am$\}$@cs.aau.dk}
}
\begin{document}
\maketitle
\begin{abstract}
This paper concerns branching simulation for weighted Kripke structures with parametric weights. Concretely, we consider a weighted extension of branching simulation where a single transition can be matched by a sequence of transitions while preserving the branching behavior. We relax this notion to allow for a small degree of deviation in the matching of weights, inducing a directed distance on states. The distance between two states can be used directly to relate properties of the states within a sub-fragment of weighted CTL. The problem of relating systems thus changes to minimizing the distance which, in the general parametric case, corresponds to finding suitable parameter valuations such that one system can approximately simulate another. Although the distance considers a potentially infinite set of transition sequences we demonstrate that there exists an upper bound on the length of relevant sequences, thereby establishing the computability of the distance. 
\end{abstract}
\section{Introduction}
In recent years within the area of embedded and distributed systems, a significant effort has been made to develop various formalisms for modeling and specification that address non-functional properties. Examples include extensions of classical Timed Automata \cite{DBLP:conf/icalp/AlurD90} with cost and resource consumption/production in Priced Timed Automata \cite{DBLP:conf/hybrid/BehrmannFHLPRV01} and Energy Automata \cite{DBLP:conf/formats/BouyerFLMS08}. For quantitative analysis of these systems, a generalization of bisimulation equivalence by Milner \cite{DBLP:books/daglib/0067019} and Park \cite{Park1981} as behavioral distances \cite{DBLP:journals/jlp/ThraneFL10,DBLP:journals/tcs/LarsenFT11,DBLP:journals/tse/AlfaroFS09} between system, has been studied. 

In parallel, \emph{parametric} extensions of various formalism have been intensively studied. Instead of requiring exact specification of e.g probabilities, cost or timing constraints, these formalisms allow for the use of \emph{parameters} representing unknown or unspecified values. This can be used to encode multiple configurations of the same system as a system being parametric in the configurable quantities. The problem is then to find ``good'' parameter values such that the instantiated system (configuration) performs as expected. For real-time systems, Parametric Timed Automata \cite{DBLP:conf/stoc/AlurHV93,DBLP:journals/ijfcs/AndreCFE09} and Parametric Stateful Timed CSP \cite{DBLP:journals/rts/Andre00D14} have been developed. Parametric probabilistic models \cite{DBLP:conf/cav/HahnHWZ10, DBLP:conf/nfm/HahnHZ11} have also been developed as well as parametric analysis for weighted Kripke structures \cite{christoffersen_et_al:OASIcs:2015:5611, DBLP:conf/lics/EmersonT99,DBLP:journals/fuin/KnapikP14}. \cite{DBLP:conf/lics/EmersonT99} provides an efficient model-checking algorithm for a parametric extension of real-time CTL on timed Kripke structures. \cite{DBLP:journals/fuin/KnapikP14} extends \cite{DBLP:conf/lics/EmersonT99} to full parameter synthesis by demonstrating that model-checking a finite subset of the entire set of parameter values is sufficient.

In this paper we revisit (parametric) weighted Kripke structures with the purpose of lifting the behavioral distance defined in \cite{WCTL_logic} to the parametric setting, demonstrate its fixed point characterization and prove computability of the distance between any two systems. The distance is a generalization of a weighted extension of branching simulation \cite{branching_bisim}. Consider the following two processes $s,t$ both ending in the inactive process 0:
\[
s \to_5  0 \text{ and } t \to_3 t_1 \to_2 0
\]
If $s,t,t_1$ satisfy the same atomic proposition, $t_1$ may be deemed unobservable and $t$ may simulate $s$ as they both evolve into the process 0 with the same overall weight. \cite{WCTL_logic} captures this situation in generality by extending branching simulation with weights.
Consider a similar scenario, where the process $t$ is now parametrized by the parameter $p$:
\[
s \to_5 0 \text{ and } t \to_p t_1 \to_2 0
\]
If $p \neq 3$ we know that $t$ can no longer simulate $s$. However, it should be intuitive that $p = 6$ is somehow worse than $p = 2$ as the latter is closer to 3. Thus, instead of considering pre-orders and Boolean answers we develop a parametric distance between states such that as the value of $p$ approaches $3$, the distance between $s$ and $t$ decreases towards 0. The distance will also give us a direct relation between the properties satisfied by $s$ and $t$ and a distance of 0 implies that any formula satisfied by $s$ is satisfied by $t$. In this way one can reason about how ``close'' a given implementation is to the specification and compare different configurations that are not necessarily able to fully simulate $s$.

The structure of this paper is as follows: in \autoref{sec:prelim} we introduce preliminaries and recall results from \cite{WCTL_logic}, \autoref{sec:WKS_sim_dist} concerns the fixed point characterization of the distance for weighted systems, \autoref{sec:PWKS_dist} lifts the distance to the parametric setting and finally \autoref{sec:conc_future} concludes the paper and describes future work.
\section{Preliminaries}\label{sec:prelim}
A weighted Kripke Structure (WKS) extends the classical Kripke structure by associating to each transition a non-negative rational transition weight.
\begin{definition}[Weighted Kripke Structure]
A weighted Kripke Structure is a tuple $\WKS$ where $S$ is a finite set of states, $\atom$ is a set of atomic propositions, $\lab: S \to \mcal{P}(\atom)$ is a labelling function, associating to each state a set of atomic propositions and $\to \subseteq S \times \Qpos \times S$ is the finite transition relation.
\end{definition}
A transition from $s$ to $s'$ with weight $w$ will be denoted by $s \to_w s'$ instead of $(s,w,s') \in \to$. 
\begin{example}
\autoref{fig:WKSex} depicts the WKS $\WKS$ where $S = \{s,s_1,s_2,s_3,s_4,t,t_1,t_2\}$, $AP = \{a,b\}$, $\lab(s) = \lab(s_1) = \lab(s_2) = \lab(t) = \lab(t_2) = \{a\}$, $\lab(s_3) = \lab(s_4) = \lab(t_1) = \{b\}$ and\\ $\to = \{(s,1,s_1),(s,2,s_2),(s_1,2,s_2),(s_1,1,s_3),(s_1,3,s_4),(s_2,5,s_4),(t,2,t_1),(t,1,t_2),(t_2,2,t_2),(t_2,1,t_1)\}$.
\begin{figure}[ht]
\centering
\begin{tikzpicture}[>=stealth',shorten <=0pt,auto,semithick, every node/.style={scale=0.6},node distance = 2.5cm]
\node[state] (s1)    [label={left:$\{\texttt{a}\}$}] {$s_1$};
\node[state] (s)     [below left of=s1,label={left:$\{\texttt{a}\}$}] {$s$};
\node[state] (s2)    [below right of=s,label={left:$\{\texttt{a}\}$}] {$s_2$};
\node[state] (s3)    [right of=s1,label={right:$\{\texttt{b}\}$}] {$s_3$};
\node[state] (s4)    [right of=s2,label={right:$\{\texttt{b}\}$}] {$s_4$};

\node[state] (t)    [right = 2cm of s3,label={left:$\{\texttt{a}\}$}] {$t$};
\node[state] (t1)   [below left of=t,label={left:$\{\texttt{b}\}$}] {$t_1$};
\node[state] (t2)   [below right of=t,label={right:$\{\texttt{a}\}$}] {$t_2$};

\path[->] (t) edge node [above left] {$2$}  (t1);
\path[->] (t) edge node [above right] {$1$}  (t2);
\path[->] (t2) edge[loop below] node [below] {$2$}  (t2);
\path[->] (t2) edge node [above] {$1$}  (t1);

\path[->] (s) edge node [above] {$1$}  (s1);
\path[->] (s) edge node [below] {$2$}  (s2);
\path[->] (s1) edge node [above] {$1$}  (s3);
\path[->] (s1) edge node [left] {$2$}  (s2);
\path[->] (s1) edge node [right] {$3$}  (s4);
\path[->] (s2) edge node [below] {$5$}  (s4);
\end{tikzpicture}
\caption{WKS $\mcal{K}$ where $s \not\leq t$ and $t \not\leq s$ but $s \simeps[0.5] t$.}
\label{fig:WKSex}
\end{figure}
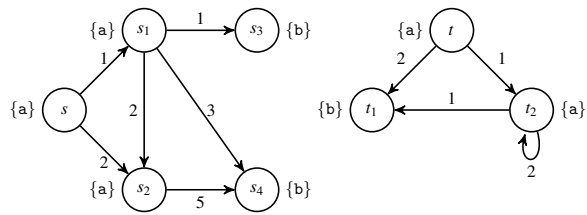
\end{example} 
To reason about behavior of WKSs, we introduce a weighted variant of the classical notion of branching simulation \cite{branching_bisim}. The basic idea is to let a transition $s \to_5 s'$ be matched by a sequence of transitions $t \to_{2} t_1 \to_{2} t_2 \to_{1} t_3$, if $t_3$ can simulate $s'$, as the accumulated weight equals 5. In addition, each intermediate state passed through in the matching transition sequence must be able to simulate $s$. In this way the branching structure of systems is preserved. Instead of always requiring exact weight matching we allow small relative deviations. These small deviations will in \autoref{sec:WKS_sim_dist} induce a directed distance between WKS states.  
\begin{definition}[Weighted Branching $\veps$-Simulation \cite{WCTL_logic}]
Given a WKS $\WKS$ and an $\veps \in \Rpos$, a binary relation $\Reps \subseteq S \times S$ is a weighted branching $\veps$-simulation relation if whenever $(s,t) \in \Reps$:
\begin{itemize}
	\item $\lab(s) = \lab(t)$ 
	\item for all $s \to_w s'$ there exists $t \to_{v_1} t_1 \to_{v_2} \cdots \to_{v_k} t_k$ such that $\sum_{i=1}^k v_i \in [w(1-\veps), w(1+\veps)], (s',t_k) \in \Reps$ and $\forall i<k. (s,t_i) \in \Reps$.
\end{itemize}
\end{definition}
If there exists a weighted branching $\veps$-simulation relating $s$ to $t$ we write $s \simeps t$. If $\veps = 0$ we write $s \leq t$ instead of $s \simeps[0] t$. Note that in this case $\sum_{i=1}^k v_i = w$.

\begin{example}\label{ex:sim}
Consider again \autoref{fig:WKSex} and the pair $(s,t)$. It is clear that $t \not\leq s$ because of the loop $t_2 \to_2 t_2$. We can also observe that $s \not\leq t$ as the transition $s \to_2 s_2$ can only be matched by $t \to_2 t_1$ but $s_2 \not\leq t_1$ as $\lab(s_2) \neq \lab(t_1)$. If we relax the matching requirements by 50\%, we get that $s$ can be simulated by $t$ i.e $s \simeps[0.5] t$; $s \to_2 s_2$ can be matched by $t \to_1 t_2$ as $[2(1-0.5),2(1+0.5)] = [1,3]$ and $1 \in [1,3]$ (another legal match would be $t \to_1 t_2 \to_2 t_2$). Now, $s_2 \to_5 s_4$ can be matched exactly by $t_2 \to_2 t_2 \to_2 t_2 \to_1 t_1$. It follows that $\veps \geq 0.5 \iff s \simeps[\veps] t$.  
\end{example}
If we restrict weighted CTL to only encompass the existential quantifier and remove the next-operator and we know that $s \simeps t$, then for any property $\phi$ of $s$, there exists a related property $\phi^{\veps}$ of $t$.
 \begin{definition}[Existential Fragment of Weighted CTL without next]
The syntax of $EWCTL_{-X}$ is given by the following abstract syntax:
\[
\phi ::= a \mid \neg a \mid \phi_1 \land \phi_2 \mid \phi_1 \lor \phi_2 \mid E(\phi_1U_I\phi_2),
\]
where $a \in \atom$, $I =[l,u]$ and $l,u \in \Qpos$ such that $l \leq u$.
For a WKS $\WKS$ and an arbitrary state $s \in S$, the semantics of $EWCTCL_{-X}$ formulae is given by a satisfiability relation, inductively defined on the structure of formulae in $EWCTL_{-X}$. For existential until; $\mcal{K},s \models E(\phi_1 U_I \phi_2) \iff$ there exists a sequence $s \to_{w_1} s_1 \to_{w_2} \cdots \to_{w_k} s_k \to_{w_{k+1}} \ldots$ where $s_k \models \phi_2, \forall i < k. s_i \models \phi_1$ and $\sum_{i=1}^k w_i \in I$. Let the \emph{$\veps$-expansion} of a formula $\phi = E(\phi_1U_{[l,u]}\phi_2)$ be given by $\phi^{\veps} = E(\phi_1^{\veps}U_{[l(1-\veps),u(1+\veps)]}\phi_2^{\veps})$ where $\phi_1^{\veps}$ and  $\phi_2^{\veps}$ are defined inductively by relaxing any interval by $\veps$ percent in both directions (just as for $[l,u]$).
\end{definition}
\begin{theorem}\label{thm:sim_logic_relation}
\cite{WCTL_logic} Let $\WKS$ be a WKS. Then for all $s,t \in S, \veps \in \Rpos$:
\[
s \simeps t \quad \text{iff} \quad \forall \veps' \in \Qpos, \veps \leq \veps'. [\forall \phi \in EWCTL_{-X}. s \models \phi \implies t \models \phi^{\veps'}].
\]
\end{theorem}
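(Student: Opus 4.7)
The plan is to prove the biimplication by treating the two directions separately: soundness goes by structural induction on the formula, while completeness requires extracting a distinguishing formula from a failed simulation attempt.

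For soundness ($\Rightarrow$), fix a rational $\veps' \geq \veps$ and induct on $\phi \in EWCTL_{-X}$. The atomic cases $\phi = a$ and $\phi = \neg a$ are immediate from the label-preservation clause $\lab(s) = \lab(t)$, since the $\veps'$-expansion of a literal is itself; conjunction and disjunction are routine from the induction hypothesis. The only real work is for $\phi = E(\phi_1 U_{[l,u]} \phi_2)$: given a witnessing path $s = s_0 \to_{w_1} s_1 \to_{w_2} \cdots \to_{w_n} s_n$ with $s_n \models \phi_2$, $s_i \models \phi_1$ for $i < n$, and $W := \sum_{i=1}^n w_i \in [l,u]$, apply the simulation clause to each step $s_{i-1} \to_{w_i} s_i$ to obtain a matching sub-sequence in $t$ of accumulated weight in $[w_i(1-\veps), w_i(1+\veps)]$, whose final state is $\Reps$-related to $s_i$ and whose strictly intermediate states are $\Reps$-related to $s_{i-1}$. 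Concatenating these sub-sequences yields a single $t$-path of total weight in $[W(1-\veps), W(1+\veps)] \subseteq [l(1-\veps'), u(1+\veps')]$; the induction hypothesis applied along it produces $\phi_1^{\veps'}$ at intermediate positions and $\phi_2^{\veps'}$ at the endpoint, witnessing $t \models \phi^{\veps'}$.

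For completeness ($\Leftarrow$), argue the contrapositive: assume $s \not\simeps t$ and construct a rational $\veps' \geq \veps$ together with a $\phi \in EWCTL_{-X}$ such that $s \models \phi$ but $t \not\models \phi^{\veps'}$. Since $S$ is finite, iteratively coarsening the candidate relation $S \times S$ by removing pairs that violate either the labelling clause or the weighted matching clause stabilises at the greatest weighted branching $\veps$-simulation, and $(s,t)$ is excluded at some finite stage. Read off from this stage a finite witness to the failure, either a reachable label mismatch or an $s$-transition with no $\veps$-matching $t$-sequence of the relevant bounded length, and translate the witness into a nested until-formula, tightening the intervals (and choosing a rational $\veps' > \veps$ if necessary to keep endpoints rational) so that no $t$-path can lie in the $\veps'$-expanded interval.

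The main obstacle is the completeness direction, specifically justifying that a single $EWCTL_{-X}$ formula can rule out \emph{all} potential matching sequences from $t$ even though the simulation clause a priori allows them to be arbitrarily long: one needs an a priori bound on the length of relevant matching sequences in order to encode ``no matching exists'' as a single interval constraint. The bounded-length observation used later in this paper to establish computability of the distance provides exactly this finiteness, so once the bound is in place the distinguishing formula is built by a routine inductive encoding of the failure witness.
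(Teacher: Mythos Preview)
The paper does not prove this theorem at all: it is quoted verbatim from \cite{WCTL_logic} with the citation in the theorem header and no accompanying proof, so there is nothing in the present paper to compare your attempt against. Your outline follows the standard template for logical characterisations of behavioural preorders (structural induction on formulae for soundness, construction of distinguishing formulae via a stratified refutation of the simulation for completeness), which is almost certainly how the cited source proceeds as well.

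One remark on your completeness sketch: your appeal to the bounded-length observation from later in this paper is anachronistic in the sense that \autoref{thm:sim_logic_relation} is imported as prior work, so the original proof in \cite{WCTL_logic} cannot rely on \autoref{lem:finsequence}. The usual way around this is not to bound matching sequences globally, but to observe that for a \emph{fixed} $\veps' > \veps$ and a fixed transition $s \to_w s'$, any matching sequence whose accumulated weight lies in $[w(1-\veps'), w(1+\veps')]$ can be taken to visit only finitely many distinct weight–state profiles (loops either overshoot the interval or can be excised), so the failure of matching at level $\veps$ already yields a finite object to encode. Your high-level plan is sound; just be aware that the finiteness argument needs to be self-contained rather than borrowed from the distance-computability section.
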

\section{Weighted Branching Simulation Distance for WKSs}\label{sec:WKS_sim_dist}
We now define a directed distance between WKS states as a least fixed point to a set of equations. The distance from $s$ to $t$, $d(s,t)$, represents the minimal $\veps$ such that $s \simeps t$. Thus, if $d(s,t) = 0$ then $s \leq t$. As the distance is based upon weighted branching $\veps$-similarity and its relative deviation in weight matching, it will not satisfy the triangle inequality and is therefore not a hemi-metric.

The distance definition follows intuitively weighted branching $\veps$-simulation. If $s \simeps t$ then no matter what transition $s$ chooses, $t$ has a matching transition sequence with a relative difference of at most $\veps$. In order words, for a given transition $s \to_w s'$, the goal of $t$ is to find a matching sequence $t\rightarrow_{v_1}t_1 \cdots \rightarrow_{v_n} t_n$ that \emph{minimizes} the relative difference $\left|\frac{\sum_{i=1}^n v_i}{w}-1\right|$ as well as ensuring that any intermediate state $t_i$ has as small a distance to $s$ as possible. The strategy of $s$ is then to find a \emph{maximal} move, given the minimization strategy of $t$. In the remainder of this section we assume a fixed WKS $\WKS$.
\begin{definition}[Weighted Branching Simulation Distance]
For an arbitrary pair of states $s,t \in S$ we define the weighted branching simulation distance from $s$ to $t$, $d(s,t)$, as the least fixed point ($\minfix$) of the following set of equations: 
\[
	d(s,t) \minfix \left\{\begin{array}{ll}
														\infty & \text{ if } \lab(s) \neq \lab(t)\\
														\max_{s \rightarrow_w s'} \left\{\min_{t\rightarrow_{v_1}t_1 \cdots \rightarrow_{v_n} t_n} \left\{\max\left\{\begin{array}{l}
														\left|\frac{\sum_{i=1}^n v_i}{w}-1\right|,d(s',t_n),\\
														\max \{d(s,t_i) | \,i < n\}
													 \end{array}
										\right\}
														\right\}\right\} &\text{ o.w}
													  \end{array}\right.
\]
\end{definition}
We assume the empty transition sequence to have accumulated weight 0 and let $\Rpos = \{w\,|\, w \in \mathbb{R}, w \geq 0\} \cup \{\infty\}$ denote the extended set of non-negative reals.
For any $d_1,d_2 \in \dfunc$ let $d_1 \leq d_2$ iff $\forall (s,t) \in S \times S. d_1(s,t) \leq d_2(s,t)$. Then $(\dfunc, \leq)$ constitutes a complete lattice. We now define a monotone function on $(\dfunc,\leq)$ that iteratively refines the distance:
\begin{definition}\label{def:f}
Let $\ffunc : \dfunc \to \dfunc$ be defined for any $d \in \dfunc$:
\[
	\ffunc(d)(s,t) = \left\{\begin{array}{ll}
														\infty & \text{ if } \lab(s) \neq \lab(t)\\
														\max_{s \rightarrow_w s'} \left\{\min_{t\rightarrow_{v_1}t_1 \cdots \rightarrow_{v_n} t_n} \left\{\max\left\{\begin{array}{l}
														\left|\frac{\sum_{i=1}^n v_i}{w}-1\right|,d(s',t_n),\\
														\max \{d(s,t_i) | \,i < n\}
													 \end{array}
										\right\}
														\right\}\right\} & \text{ o.w}
													  \end{array}\right.
\]
\end{definition}
By Tarski's fixed point theorem \cite{tarski} we are guaranteed the existence of a least (pre-)fixed point. Thus, the weighted branching simulation distance is well-defined. Note that any transition $s \to_w s'$, $t$ may have an infinite set of possible transition sequence matches in the presence of cycles in the system. To this end we demonstrate an upper bound, $N$, on the length of relevant matching sequences. As the set of sequences of length at most $N$ is finite (the WKS is finite) computability of the distance follows. The first step is proving that any sequence exercising a loop with accumulated weight 0 can be ignored. We refer to these cycles as \emph{0-cycles}.

\begin{lemma}\label{lem:zerocycle}
For a given move $s \to_w s'$, any transition sequence $t\rightarrow_{v_1}t_1 \cdots \rightarrow_{v_n} t_n$ with a 0-cycle can be removed without affecting the distance $d(s,t)$.
\end{lemma}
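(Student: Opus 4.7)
The plan is to show that whenever a matching transition sequence contains a 0-cycle, there is a strictly shorter matching sequence whose value in the inner $\max$ of \autoref{def:f} is at most as large. Since matching sequences are aggregated by $\min$, such cyclic sequences are never strictly required to attain the minimum and can be discarded without affecting $d(s,t)$.

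Concretely, I would fix a move $s\to_w s'$ together with a matching sequence $\sigma = t\to_{v_1}t_1\to_{v_2}\cdots\to_{v_n} t_n$ (letting $t_0=t$) that contains a 0-cycle, i.e.\ indices $0\leq i<j\leq n$ with $t_i=t_j$ and $\sum_{k=i+1}^{j}v_k=0$. I splice out the cycle to obtain
\[
\sigma' \;=\; t\to_{v_1}t_1\to\cdots\to_{v_i}t_i\to_{v_{j+1}}t_{j+1}\to\cdots\to_{v_n}t_n.
\]
This is a legal transition sequence in $\mcal{K}$ because $t_i=t_j$ makes the transition $t_j\to_{v_{j+1}}t_{j+1}$ available at $t_i$; the special cases $i=0$ or $j=n$ pose no issue, since in the latter case $\sigma'$ simply ends at $t_i=t_n$, and the same comparison below goes through.

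Next I would compare the three quantities inside the $\max$ for $\sigma$ and $\sigma'$. The accumulated weight of $\sigma'$ equals $\sum_{k=1}^n v_k - \sum_{k=i+1}^j v_k = \sum_{k=1}^n v_k$, so the relative-deviation term $\bigl|\sum v/w - 1\bigr|$ is unchanged. The endpoint of $\sigma'$ is again $t_n$, so $d(s',t_n)$ is unchanged. Finally, the set of intermediate states visited by $\sigma'$ is a subset of those visited by $\sigma$, so $\max\{d(s,t'_k)\mid k< n'\}\leq\max\{d(s,t_k)\mid k< n\}$. Hence the inner $\max$ evaluated on $\sigma'$ is bounded by the inner $\max$ evaluated on $\sigma$, and so $\sigma$ can be replaced by $\sigma'$ in the outer $\min$ without increasing its value. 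Iterating the splicing yields, for any cyclic $\sigma$, an acyclic sequence achieving an at-most-equal value, which proves the claim.

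I do not foresee a substantive obstacle: the argument is a direct accounting over the three components of $\ffunc$. The only point requiring care is verifying that the splicing yields a well-formed transition sequence in all boundary configurations and that no argument of $d$ has changed after the splice, both of which follow immediately from $t_i = t_j$ and from the fact that neither the endpoint $t_n$ nor the source $t$ is removed.
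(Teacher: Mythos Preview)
Your proposal is correct and follows the same approach as the paper: remove the 0-cycle, observe that the accumulated weight is unchanged, the endpoint is unchanged, and the set of intermediate states can only shrink, so the inner $\max$ does not increase and the outer $\min$ is unaffected. Your version is more carefully spelled out (explicit indices, boundary cases $i=0$ and $j=n$, and the iteration to eliminate all 0-cycles), but the underlying argument is identical to the paper's two-sentence proof.
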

\begin{proof}
A transition sequence with one or more 0-cycles has the exact same accumulating weight as the corresponding sequence with no 0-cycles. Furthermore, exercising the loop (once) can only introduce new states, leading to a potentially larger value of $\max \{d(s,t_i) | \,i < n\}$. 
Thus, 0-cycles can be ignored.
\end{proof}
Given that 0-cycles can be removed, we now prove an upper bound $N$ on the length of sequences that affect the distance $d(s,t)$. Thus, any sequence longer than $N$ can be safely ignored.
\begin{lemma}\label{lem:finsequence}
Given that $\mcal{K}$ has no 0-cycles, it is the case that whenever $s \to_w s'$:
\begin{align*}
\exists N. &\forall \pi = t\rightarrow_{v_1}t_1 \ldots \rightarrow_{v_n} t_n, n \geq N.\\
&\exists \pi^* = t\rightarrow_{u_1} t_1' \ldots \rightarrow_{u_m} t_m', m \leq N.\\
&t_n = t_m' \,\land\, \left|\frac{\sum_{i=1}^m u_i}{w}-1\right| \leq \left|\frac{\sum_{i=1}^n v_i}{w}-1\right| \land\\
& \{t_1', \ldots, t_{m-1}'\} \subseteq \{t_1,\ldots,t_{n-1}\}
\end{align*}
\end{lemma}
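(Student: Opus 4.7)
The plan is to extract the bound $N$ from purely structural quantities of $\mcal{K}$. Since $\mcal{K}$ is finite and, by assumption, has no 0-cycle, every simple cycle has strictly positive weight. Let $w^*_{\min} > 0$ and $c^*_{\max}$ denote the minimum and maximum weights of a simple cycle in $\mcal{K}$ (if no cycles exist at all, then every walk from $t$ has length at most $|S|-1$ and the lemma is vacuous, so we may assume at least one cycle). I would take
\[
N \;=\; |S|\Bigl(1 + \frac{w + c^*_{\max}}{w^*_{\min}}\Bigr).
\]

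First I would invoke the standard cycle decomposition of a walk: by iteratively detaching a simple sub-cycle at a repeated vertex, the walk $\pi$ from $t$ to $t_n$ can be written as a simple path $\pi^0$ from $t$ to $t_n$ (of length at most $|S|$) together with a multiset of simple cycles $C_1,\dots,C_k$, each of length at most $|S|$ and each anchored at a vertex of $\pi^0$ or of a previously-detached cycle. Writing $W_0 = \mathrm{wt}(\pi^0)$ and $c_j = \mathrm{wt}(C_j) \geq w^*_{\min}$, we then have $n \leq |S|(k+1)$ and $W := \sum_{i=1}^n v_i = W_0 + \sum_{j=1}^k c_j$. The hypothesis $n \geq N$ forces $k \geq (w + c^*_{\max})/w^*_{\min}$ and hence $W \geq k\cdot w^*_{\min} \geq w + c^*_{\max}$.

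Next I would construct $\pi^*$ greedily: starting from $\pi$, detach cycles one at a time for as long as the remaining walk still has total weight at least $w + c^*_{\max}$. Since every cycle has weight at most $c^*_{\max}$, the invariant ``remaining weight is $\geq w$'' is preserved at every step, and when the process halts, the resulting walk $\pi^*$ has weight $W^* \in [w, w + c^*_{\max})$. Therefore
\[
\left|\tfrac{W^*}{w}-1\right| \;=\; \tfrac{W^*-w}{w} \;\leq\; \tfrac{c^*_{\max}}{w} \;\leq\; \tfrac{W-w}{w} \;=\; \left|\tfrac{W}{w}-1\right|,
\]
which is the required deviation bound. Letting $k^*$ be the number of cycles retained, we have $k^*\cdot w^*_{\min} \leq W^* - W_0 \leq w + c^*_{\max}$, so the length of $\pi^*$ is at most $|S|(1 + k^*) \leq N$. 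Since $\pi^*$ only uses transitions occurring in $\pi$, we automatically obtain $t'_m = t_n$ and $\{t'_1,\dots,t'_{m-1}\} \subseteq \{t_1,\dots,t_{n-1}\}$.

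The main obstacle I expect is the cycle-decomposition bookkeeping: one must verify that iteratively removing simple sub-cycles from $\pi$ really terminates in a simple path from $t$ to $t_n$, and that re-attaching any subset of the $C_j$ at their original anchors yields a genuine walk of $\mcal{K}$ with the claimed edge multiset (and hence the claimed total weight and endpoint). I would handle both by induction on the number of vertex repetitions, using the elementary fact that any walk containing a repeated vertex contains a simple cycle. Notably, the potentially problematic case $W < w$---where dropping a positive-weight cycle would \emph{worsen} the deviation---does not arise under the hypothesis $n \geq N$, since the choice of $N$ guarantees $W \geq w + c^*_{\max}$ and thus keeps us in the regime where shortening is monotonically beneficial.
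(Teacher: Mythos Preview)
Your argument is correct and follows the same core idea as the paper: a sufficiently long walk must overshoot the target weight $w$ by so much that excising cycles can only improve the relative deviation, and the number of cycles that can remain is bounded by a weight ratio, giving a length bound proportional to $|S|$.

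The execution differs in two useful ways. First, you work with the minimum and maximum \emph{simple-cycle} weights $w^*_{\min},c^*_{\max}$, whereas the paper uses the minimum \emph{transition} weight $w_{\min}$; your choice is the right one, since the no-$0$-cycle hypothesis does not forbid individual $0$-weight transitions, and in that case the paper's bound $\tfrac{2\,s_{w_{\max}}}{w_{\min}}\cdot|S|$ degenerates. Second, your greedy one-cycle-at-a-time removal, maintaining the invariant ``remaining weight $\geq w$'', makes explicit what the paper only sketches (it speaks of ``removing the loop'' without controlling how far the weight may drop). The paper's threshold $2\,s_{w_{\max}}$ plays the same role as your $w+c^*_{\max}$: it guarantees the original deviation is at least $1$, so any shortening that keeps the weight nonnegative cannot make things worse. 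Your invariant is sharper because it also rules out undershooting below $w$.

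Two minor points to tighten. The halting case where the walk becomes simple while its weight is still $\geq w+c^*_{\max}$ is not covered by your stated interval $W^*\in[w,w+c^*_{\max})$, but the deviation inequality still holds trivially since $w\leq W^*\leq W$, and the length bound is immediate. Also, when you bound $m$ via ``$k^*$ cycles retained'', it is cleaner to decompose $\pi^*$ afresh rather than refer to the original decomposition of $\pi$, since the greedily removed cycles need not coincide with those of the initial decomposition; the arithmetic is unchanged.
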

\begin{proof}
Let $w_{\min} = \min\{w\,|\,s \to_w s'\}$ be the minimum weight in the WKS and let
$s_{w_{\max}} = \max\{w\,|\,s \to_w s'\}$ be the maximum weight out of $s$. We now demonstrate that $N \geq \frac{2 \cdot {s_{w_{\max}}}}{w_{\min}} \cdot |S|$ is sufficient. Any sequence of length $|S|$ must have a loop which, by assumption, cannot have accumulated weight 0. Thus, after $|S|$ transitions, the accumulated weight must be at least $w_{\min}$. Without loss of generality, assume that it is \emph{exactly} $w_{\min}$. If the sequence exercises the loop a number of time, the accumulated weight will at some point reach $2 \cdot s_{w_{\max}}$. Let this sequence be $\pi = t\rightarrow_{v_1}t_1 \cdots \rightarrow_{v_k} t_k$ and let $x$ denote the number of times the loop is exercised i.e $x \cdot w_{\min} \geq 2 \cdot s_{w_{\max}}$. Consider now the corresponding sequence $\pi^* = t\rightarrow_{u_1} t_1' \cdots \rightarrow_{u_l} t_l'$ where the loop is removed. As $\sum_{i=1}^k v_i \geq 2 \cdot s_{w_{\max}}$ it follows that $\left|\frac{\sum_{i=1}^k v_i}{s_{w_{\max}}}-1\right| > 1$. By assumption, removing the loop results in a strictly lower accumulated weight implying $\left|\frac{\sum_{i=1}^l u_i}{s_{w_{\max}}}-1\right| < \left|\frac{\sum_{i=1}^k v_i}{s_{w_{\max}}}-1\right|$. We also directly have $t_k = t_l'$ and $\{t_1,\ldots,t_l'\} \subseteq \{t_1,\ldots,t_k\}$. We will now derive $N$ from the inequality $x \cdot w_{\min} \geq 2 \cdot s_{w_{\max}}$. The number of times the loops is exercised must be equal to the length of the entire sequence divided by $|S|$ as we are sure to exercise the loop every $|S|$ states. Thus, $x = \frac{N}{|S|} \implies \frac{N}{|S|} \cdot w_{\min} \geq 2 \cdot s_{w_{\max}}$ and finally,
\[
N \geq \frac{2 \cdot s_{w_{\max}}}{w_{\min}} \cdot |S|.
\]
\end{proof}

\begin{theorem}[Computability]\label{thm:wdistcomputable}
For two states $s,t \in S$, the weighted branching simulation distance is computable.
\end{theorem}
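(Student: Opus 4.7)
The plan is to combine Lemma~\ref{lem:zerocycle} and Lemma~\ref{lem:finsequence} to reduce the least-fixed-point computation of $d$ to a finite effective procedure. First, by Lemma~\ref{lem:zerocycle}, every transition sequence exercising a 0-cycle can be discarded without affecting $d$. Next, by Lemma~\ref{lem:finsequence}, there is a uniform bound $N$ (which can be taken as $\frac{2 \cdot W_{\max}}{w_{\min}} \cdot |S|$, where $W_{\max}$ and $w_{\min}$ denote the largest and smallest weights appearing in $\mcal{K}$) beyond which longer matching sequences are never needed. Since $\mcal{K}$ has only finitely many transitions, the set of matching sequences from any $t$ of length at most $N$ is finite and can be enumerated explicitly.

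With this restriction, $\ffunc$ from \autoref{def:f} becomes effectively computable: given $d \in \dfunc$, evaluating $\ffunc(d)(s,t)$ reduces to a finite maximum over outgoing transitions from $s$ and a finite minimum over the bounded-length sequences from $t$, each term being a rational combination of transition weights and previously-computed entries of $d$. I would then compute $d$ by Kleene iteration: initialise $d_0(s,t) = \infty$ when $\lab(s) \neq \lab(t)$ and $d_0(s,t) = 0$ otherwise (the bottom of the relevant sub-lattice of label-respecting functions), and set $d_{n+1} := \ffunc(d_n)$ until $d_{n+1} = d_n$.

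The main obstacle is showing that this iteration actually \emph{terminates}, rather than merely converging in the limit as guaranteed by Tarski. The key observation is that every value $\ffunc^n(d_0)(s,t)$ lies in the finite set $\mcal{V} \subseteq \Qpos \cup \{\infty\}$ consisting of $0$, $\infty$, and all rationals of the form $\left|(\sum_{i=1}^k v_i)/w - 1\right|$ arising from some transition $s \to_w s'$ and some length-$\leq N$ sequence $t \to_{v_1} \cdots \to_{v_k} t_k$ in $\mcal{K}$. This follows by induction on $n$, since $\ffunc$ combines such quantities only via finite max and min. Because $d_0 \leq \ffunc(d_0) \leq \ffunc^2(d_0) \leq \cdots$ is monotone and each of the $|S|^2$ component sequences takes values in the finite set $\mcal{V}$, the iteration must stabilise within at most $|S|^2 \cdot |\mcal{V}|$ steps. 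The stabilisation point is a fixed point of $\ffunc$, and by Kleene's theorem it is the least one, i.e.\ exactly $d$, witnessing computability.
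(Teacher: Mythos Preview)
Your proposal is correct and follows essentially the same route as the paper: invoke Lemma~\ref{lem:zerocycle} and Lemma~\ref{lem:finsequence} to bound matching sequences, observe that this leaves only finitely many values of the form $\left|\sum_i v_i/w - 1\right|$, and conclude that the fixed-point iteration stabilises in finitely many steps. Your write-up is in fact more explicit than the paper's own proof, which simply asserts that ``the distance must at some point converge'' without spelling out the Kleene iteration or the finite-value-set argument; your observation that each component of the monotone chain $d_0 \leq \ffunc(d_0) \leq \cdots$ ranges over the finite set $\mcal{V}$ makes the termination bound concrete.
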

\begin{proof}
\autoref{lem:finsequence} provides an upper bound on the length of transition sequence that we need to consider in the computation of $d(s,t)$ for any states $s,t \in S$ under the assumption that there are no 0-cycles. By \autoref{lem:zerocycle} we know that any 0-cycles can be removed without affecting the distance. Thus when computing the distance we know for the sub-expression
\[
\min_{t\rightarrow_{v_1}t_1 \cdots \rightarrow_{v_n} t_n} \left\{\max\left\{\begin{array}{l}
														\left|\frac{\sum_{i=1}^n v_i}{w}-1\right|,d(s',t_n),\\
														\max \{d(s,t_i) | \,i < n\}
													 \end{array}
										\right\}
														\right\}
\]
that $n \leq \frac{2 \cdot s_{w_{\max}}}{w_{\min}} \cdot |S|$. As the WKS has a finite number of states and a finite transition relation, only a finite number of sequences of finite length exist. Thus we can modify the distance function to only consider these without affecting the computed distance. Thus, the distance must at some point converge as only a finite number of relative distances on the form $\left|\frac{\sum_{i=1}^n v_i}{w}-1\right|$ exists.
\end{proof}
We leave the exact complexity of computing $d(s,t)$ open but note that deciding $d(s,t) = 0$ is NP-complete \cite{WCTL_logic}.
\begin{example}
Consider again \autoref{fig:WKSex} and the computation of $d(s,t)$.
For the transition $s \to_1 s_1$ only one sequence is considered instead of the entire infinite set arising from the loop; $t \to_1 t_2$. As $\left|\frac{3}{1}-1\right| > \left|\frac{1}{1}-1\right|$, even the sequence that only exercises the loop once is worse than just transitioning to $t_2$ directly. This happens because the accumulated matching weight exceeds the weight being matched and the same states are involved in both sequences. Therefore any sequence involving the loop can be ignored. Note that we in this example consider fewer sequences than implied by the upper bound given in \autoref{lem:finsequence}. For $s \to_1 s_1$ the bound would be $\frac{2 \cdot 2}{2} \cdot 8 = 16$ but it should be clear that the loop can be safely ignored.
For the transition $s \to_2 s_2$, there are two relevant matching sequences; $t \to_1 t_2$ and $t \to_1 t_2 \to_2 t_2$. Thus,
\[
d(s,t) \minfix \max\left\{
													\begin{array}{l}
														\max\left\{\left|\frac{1}{1}-1\right|,d(s_1,t_2)\right\},\\[0.1cm]
														\min\left\{\begin{array}{l}
															\max\left\{\left|\frac{1}{2}-1\right|,d(s_2,t_2)\right\},\\[0.1cm]
															\max\left\{\left|\frac{3}{2}-1\right|,d(s_2,t_2),d(s,t_2)\right\}
														\end{array}\right\}
													\end{array}\right\}
\]
It is easily shown that $d(s_2,t_2) = 0$ as $s_2 \to_5 s_4$ can be matched exactly by $t_2 \to_2 t_2 \to_2 t_2 \to_1 t_1$. Thus,
\[
d(s,t) \minfix \max\left\{\frac{1}{2}, d(s_1,t_2),d(s,t_2)\right\}
\]
where
\begin{align*}
&d(s_1, t_2) \minfix \max\left\{\!\!\!\!\!\begin{array}{ll}
																\max\left\{\left|\frac{2}{2}-1\right|,d(s_2,t_2)\right\},\\[0.1cm]
																\max\left\{\left|\frac{1}{1}-1\right|,d(s_3,t_1)\right\},\\[0.1cm]
																\max\left\{\left|\frac{3}{3}-1\right|,d(s_1,t_2),d(s_4,t_1)\right\}
															 \end{array}\!\!\!\!\!\!\right\} \,\text{ and}\! &d(s,t_2) \minfix \max\left\{\!\!\!\!\begin{array}{ll}
																																						\max\left\{\left|\frac{2}{1}-1\right|, d(s_2,t_2)\right\},\\[0.1cm]
																																						\max\left\{\left|\frac{2}{2}-1\right|, d(s_2,t_2)\right\}
																																						\end{array}\!\!\!\!\!\right\}.
\end{align*}
As $s_4 \not\to$, $s_3 \not\to$ and $t_1 \not\to$ it follows that $d(s_4,t_1) = d(s_3,t_1) = 0$, hence
\[
d(s_1,t_2) \minfix \max\left\{\frac{1}{2},d(s_1,t_2)\right\}.
\]
The least solution to this equation is $\frac{1}{2}$ hence $d(s_1,t_2) = d(s,t) = \frac{1}{2}$. From \autoref{ex:sim} we know that $s \simeps t$ for any $\veps \geq 0.5$ i.e for any $\veps \geq d(s,t)$.
\end{example}
Now that we have established the computability of the distance we prove its relation to weighted branching $\veps$-simulation.
\begin{theorem}\label{lem:sim_dist_relation}
For two states $s,t \in S$ and $\veps \in \Rpos$:
\[
d(s,t) \leq \veps \text{ iff } s \simeps t
\]
\end{theorem}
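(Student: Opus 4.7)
The plan is to prove the biconditional by two independent set-up arguments, both exploiting the fact that $d$ is the least (pre-)fixed point of the monotone operator $\ffunc$, and that by \autoref{lem:zerocycle} and \autoref{lem:finsequence} the inner $\min$ in $\ffunc$ ranges effectively over a finite set, so minimizing sequences actually exist.

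For the direction $s \simeps t \Rightarrow d(s,t) \leq \veps$, I would take any witnessing weighted branching $\veps$-simulation $\Reps$ and define a candidate distance $d_{\Reps} \in \dfunc$ by $d_{\Reps}(s',t') = \veps$ if $(s',t') \in \Reps$ and $d_{\Reps}(s',t') = \infty$ otherwise. The goal is to show $d_{\Reps}$ is a pre-fixed point of $\ffunc$, i.e.\ $\ffunc(d_{\Reps}) \leq d_{\Reps}$; then Knaster--Tarski gives $d \leq d_{\Reps}$ and in particular $d(s,t) \leq \veps$. The only nontrivial case is $(s',t') \in \Reps$: labels agree by definition, and for every $s' \to_w s''$ the simulation clause supplies a sequence $t' \to_{v_1} t_1 \cdots \to_{v_k} t_k$ with $|\sum_i v_i / w - 1| \leq \veps$, $(s'',t_k)\in \Reps$, and $(s',t_i) \in \Reps$ for $i<k$. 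Each of the three quantities inside the outer $\max$ of $\ffunc(d_{\Reps})$ is then bounded by $\veps$, so $\ffunc(d_{\Reps})(s',t') \leq \veps = d_{\Reps}(s',t')$.

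For the converse $d(s,t) \leq \veps \Rightarrow s \simeps t$, I would define $\Reps = \{(s',t') \in S \times S \mid d(s',t') \leq \veps\}$ and verify that $\Reps$ is a weighted branching $\veps$-simulation; since $(s,t) \in \Reps$ this gives $s \simeps t$. For $(s',t') \in \Reps$ one has $d(s',t') < \infty$ so $\lab(s') = \lab(t')$. Using $d = \ffunc(d)$, for any $s' \to_w s''$ there exists by \autoref{lem:finsequence} a finite minimizing sequence $t' \to_{v_1} t_1 \cdots \to_{v_k} t_k$ realizing the inner $\min$, and by the outer equation the three quantities $|\sum_i v_i/w - 1|$, $d(s'',t_k)$, and $d(s',t_i)$ for $i<k$ are all bounded above by $d(s',t') \leq \veps$. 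Thus the weight ratio lies in $[w(1-\veps),w(1+\veps)]$, $(s'',t_k)\in \Reps$ and $(s',t_i) \in \Reps$ for $i<k$, which is exactly the simulation requirement.

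The main obstacle is a technical one: the $\min$ in $\ffunc$ is over an a priori infinite set of transition sequences, so in the $\Rightarrow$ direction one needs an actual witnessing sequence, not just an infimum. This is precisely where \autoref{lem:zerocycle} and \autoref{lem:finsequence} are essential, since they reduce the minimization to a finite set where the optimum is attained. In the $\Leftarrow$ direction no such subtlety arises, because the simulation already supplies a concrete matching sequence which trivially upper-bounds the $\min$.
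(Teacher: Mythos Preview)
Your proposal is correct and follows essentially the same two-part structure as the paper: for $s \simeps t \Rightarrow d(s,t)\leq\veps$ you build a pre-fixed point of $\ffunc$ from the simulation (the paper does the same, using the maximal $\veps$-simulation $\simeps$ where you use an arbitrary witness $\Reps$), and for $d(s,t)\leq\veps \Rightarrow s\simeps t$ you verify that the sub-level set $\{(s',t')\mid d(s',t')\leq\veps\}$ is a weighted branching $\veps$-simulation, exactly as the paper does. Your explicit appeal to \autoref{lem:zerocycle} and \autoref{lem:finsequence} to guarantee that the inner $\min$ is attained is a refinement the paper leaves implicit, but otherwise the arguments coincide.
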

\begin{proof}
$(\implies)$ For this direction we prove that $R^{\veps} = \{(s,t) \,|\, s,t \in S, d(s,t) \leq \veps\}$ is a weighted branching $\veps$-simulation relation. Suppose $(s,t) \in R^{\veps}$. Then $d(s,t) \leq \veps$ and by the fixed point property of $d$,
\[
d(s,t) = \max_{s \rightarrow_w s'} \left\{\min_{t\rightarrow_{v_1}t_0 \cdots \rightarrow_{v_n} t_n} \left\{\max\left\{\begin{array}{l}
														\left|\frac{\sum_{i=1}^n v_i}{w}-1\right|,\\
														\max \{d(s',t_n)\} \cup \{d(s,t_i) | i < n\}
													 \end{array}
										\right\}
														\right\}\right\}
\]
We immediately have that for any transition $s \to_w s'$ there exists a matching transitions sequence $t\rightarrow_{v_1}t_0 \cdots \rightarrow_{v_n} t_n$ such that $\left|\frac{\sum_{i=1}^n v_i}{w}-1\right| \leq \veps$, $d(s',t_n) \leq \veps$ and $\forall i < n. d(s,t_i) \leq \veps$. Thus, by definition of $R^{\veps}$, for any transition $s \to_w s'$ there exists a sufficient matching sequence from $t$ such that $(s',t_n) \in R^{\veps}$ and $(s,t_i) \in R^{\veps}$ for any $i < n$. 

$(\impliedby)$ Let
\[
d^*(s,t) = \left\{\begin{array}{ll}
						\veps &\text{ if } s \simeps t\\
						\infty &\text{ otherwise}
						\end{array}\right.
\]
We now prove that $d$ is a pre-fixed point of $\ffunc$ i.e $\ffunc(d^*)(s,t) \leq d^*(s,t)$ for any pair $(s,t) \in S$. If $s \not\simeps t$ then $d^*(s,t) = \infty$ and there is nothing to prove. If $s \simeps t$ then for any transition $s \to_w s'$ there exists a matching sequence $t\rightarrow_{v_1}t_0 \cdots \rightarrow_{v_n} t_n$ such that $\sum_{i=1}^n v_i \in [w(1-\veps), w(1+\veps)]$, $s' \simeps t_n$ and $s \simeps t_i$ for any $i < n$. We can now argue that
\[
\max_{s \rightarrow_w s'} \left\{\min_{t\rightarrow_{v_1}t_0 \cdots \rightarrow_{v_n} t_n} \left\{\max\left\{\begin{array}{l}
														\left|\frac{\sum_{i=1}^n v_i}{w}-1\right|,\\
														\max \{d^*(s',t_n)\} \cup \{d^*(s,t_i) | i < n\}
													 \end{array}
										\right\}
														\right\}\right\} \leq \veps
\] 
as $\sum_{i=1}^n v_i \in [w(1-\veps), w(1+\veps)]$ is equivalent to $\left|\frac{\sum_{i=1}^n v_i}{w}-1\right| \leq \veps$, $s' \simeps t_n$ implies $d^*(s',t_n) = \veps$ and similarly $d^*(s,t_i) = \veps$ for any $i < n$. As $d^*$ is a pre-fixed point of $\ffunc$ and $d^*(s,t) = \veps$ it must be the case that $d(s,t) \leq \veps$ as $d$ is the \emph{smallest} pre-fixed point of $\ffunc$.
\end{proof}
Combining \autoref{thm:sim_logic_relation} and \autoref{lem:sim_dist_relation} we immediate get a relation between the distance from one state $s$ to another state $t$ and their $EWCTL_{-X}$ properties: 
\[
d(s,t) \leq \veps \quad \text{iff} \quad \forall \veps' \in \Qpos, \veps \leq \veps'. [\forall \phi \in EWCTL_{-X}. s \models \phi \implies t \models \phi^{\veps'}.
\]
\section{Weighted Branching Simulation Distances for Parametric WKSs}\label{sec:PWKS_dist}
We now extend WKS with parametric weights. The lifted parametric distance will be from a WKS to a parametric system and is represented as a parametric expression that can be evaluated to a rational by a \emph{parameter valuation}. If one abstracts multiple configurations of the same system as one parametric system and calculate the parametric distance, evaluating the distance with respect to a parameter valuation then corresponds to calculating the exact distance from a specific configuration (given by the valuation) to the WKS. Thus, instead of working with multiple WKS configurations, one can use a parametric system and compute the parametric distance once.

A parametric weighted Kripke structure (PWKS) extends WKS by allowing transitions to have parametric weights. Let $\param = \{p_1,\ldots,p_n\}$ be a fixed finite set of parameters. A \emph{parameter valuation} is a function mapping each parameter to a non-negative rational; $v : \param \to \Qpos$. The set of all such valuation will be denote by $\eval$.

\begin{definition}[Parametric Weighted Kripke Structure]
A \emph{parametric weighted Kripke structure} is a tuple $\PWKS$, where $S$ is a finite set of states, $\atom$ is a set of atomic propositions, $\lab: S\to \mcal{P}(\atom)$ is a mapping from states to sets of atomic propositions and $\to \subseteq S \times \param \cup \Qpos \times S$ the finite transition relation. 
\end{definition}
Unless otherwise specified, we assume a fixed PWKS $\PWKS$ in the remainder of this section. One can instantiate a PWKS to a WKS by applying a parameter valuation. A PWKS thus represents an infinite set of WKSs.

\begin{definition}
Given a parameter valuation $v \in \eval$, we define the \emph{instantiated WKS} of $\mcal{K}_\param$ under $v$ to be $\EKS$ where
\[
\to_v = \{(s,v(p),s')\mid (s,p,s')\in \to, p \in \param\} \cup \{(s,w,s')\mid (s,w,s')\in \to, w \in \Qpos\}
\]
\end{definition}
For a state $s$ in $\mcal{K}_\param$ let $s[v]$ be the corresponding state in the WKS $\mcal{K}_\param^v$ and let $\simeps$ be lifted to disjoint unions of WKSs in the natural way. 

Given a WKS state $s$, a PWKS state $t$ and $\veps \geq 0$ we now state three interesting problems:
 \begin{enumerate}
	 \item Does there exist a $v \in \eval$ such that $s \simeps t[v]$?
	 \item Can we characterize the set of ``good'' parameter valuation $V = \{v \,|\, v \in \eval, s \simeps t[v]\}$?
	 \item Can we synthesize a valuation $v \in \eval$ that minimizes $\veps$ for $s \simeps t[v]$?
 \end{enumerate}

We will show how to solve (2) by fixed point computations. The result will be a set of linear inequalities over parameters and $\veps$ which has as solution a set of parameter valuations. Instead of considering a concrete $\veps \in \Rpos$, one can let $\veps$ be an extra parameter. Thus, (1) and (3) can be solved by first solving (2) and applying e.g $Z3$ \cite{DBLP:conf/tacas/MouraB08} and $\nu Z$ \cite{DBLP:conf/tacas/BjornerPF15} or similar tools to solve the inequalities and search for solutions that minimize $\veps$.
\begin{example}
Consider \autoref{fig:PWKSex}. From \autoref{ex:sim} we know that $s \leq_{0.5} t[v]$ if $v(p) = 1$. Both $v(p) = 0$ and $v(p) = 2$ imply $s \leq_{1} t[v]$. It turns out that $v(p) = 1$ is the valuation that minimizes $\veps$ for $s \simeps t[v]$.
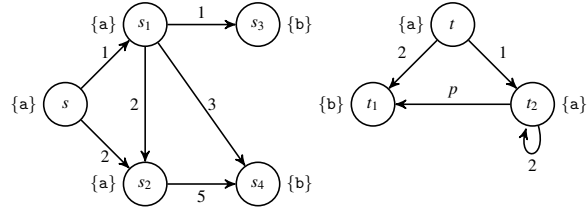
\begin{figure}[ht]
\centering
\begin{tikzpicture}[>=stealth',shorten <=0pt,auto,semithick, every node/.style={scale=0.6},node distance = 2.5cm]
\node[state] (s1)    [label={left:$\{\texttt{a}\}$}] {$s_1$};
\node[state] (s)     [below left of=s1,label={left:$\{\texttt{a}\}$}] {$s$};
\node[state] (s2)    [below right of=s,label={left:$\{\texttt{a}\}$}] {$s_2$};
\node[state] (s3)    [right of=s1,label={right:$\{\texttt{b}\}$}] {$s_3$};
\node[state] (s4)    [right of=s2,label={right:$\{\texttt{b}\}$}] {$s_4$};

\node[state] (t)    [right = 2cm of s3,label={left:$\{\texttt{a}\}$}] {$t$};
\node[state] (t1)   [below left of=t,label={left:$\{\texttt{b}\}$}] {$t_1$};
\node[state] (t2)   [below right of=t,label={right:$\{\texttt{a}\}$}] {$t_2$};

\path[->] (t) edge node [above left] {$2$}  (t1);
\path[->] (t) edge node [above right] {$1$}  (t2);
\path[->] (t2) edge[loop below] node [below] {$2$}  (t2);
\path[->] (t2) edge node [above] {$p$}  (t1);

\path[->] (s) edge node [above] {$1$}  (s1);
\path[->] (s) edge node [below] {$2$}  (s2);
\path[->] (s1) edge node [above] {$1$}  (s3);
\path[->] (s1) edge node [left] {$2$}  (s2);
\path[->] (s1) edge node [right] {$3$}  (s4);
\path[->] (s2) edge node [below] {$5$}  (s4);
\end{tikzpicture}
\caption{A WKS (left) and a PWKS (right)}
\label{fig:PWKSex}
\end{figure}
\end{example}
When lifting the distance to the parametric setting, we consider disjoint unions of systems and require that only the simulating system can be parametric. Let $\mcal{K_\param}=(S_\param,\atom,\lab_\param,\to^\param)$ be a PWKS and $\WKS$ a WKS.
If we were to validate a given parameter valuation we could simply apply the valuation to the PWKS and use $\ffunc$ directly to decide if the distance is below some $\veps$. As we want a full characterization of the good parameter valuation we will instead represent the distance as a function from a pair of states to a function that returns a weighted distance when a parameter valuation is applied; $d: S \times S_\param \to (\eval \to \Rpos)$. We let the set of such function be denoted by $\dfuncparsem$ and define an ordering as follows; for any $d^1,d^2 \in \dfuncparsem$ let $d^1 \leq d^2$ iff $\forall s \in S, t \in S_\param, v \in \eval: d^1(s,t)(v) \leq d^2(s,t)(v)$. Let $\equiv$ denote the set of pairs of semantic equivalent states. Then $(\dfuncparsem,\leq)$ constitutes a complete lattice and we can define a monotone function on $(\dfuncparsem, \leq)$ that iteratively refines the distance:
\begin{definition}\label{def:fpar}
Let $\ffunc : \dfuncparsem \to \dfuncparsem$ be defined for any $d \in \dfuncparsem$:
\[
	\ffunc(d)(s,t) = \left\{\begin{array}{ll}
														\infty & \text{ if } \lab(s) \neq \lab(t)\\
														\max_{s \rightarrow_w s'} \left\{\min_{t\rightarrow_{v_1}t_1 \cdots \rightarrow_{v_n} t_n} \left\{\max\left\{\begin{array}{l}
														\left|\frac{\sum_{i=1}^n v_i}{w}-1\right|,d(s',t_n),\\
														\max \{d(s,t_i) | \,i < n\}
													 \end{array}
										\right\}
														\right\}\right\} & \text{ o.w}
													  \end{array}\right.
\]
\end{definition}
Again, by Tarski's fixed point theorem \cite{tarski} we are guaranteed a least fixed point, denoted by $d_{\min}$. The problem is now that the ordering $\leq$ implies a universal quantification over the entire infinite set of parameter valuations; thus, checking if a fixed point is reached is highly impractical. Instead of representing the distance as a function in valuations we will define it as a \emph{parametric expression} that captures the distance function syntactically. For any two states $s,t$ we associate an syntactic expression $E_{s,t}$ such that the solution set to the inequality $E_{s,t} \leq \veps$ characterizes the set of good parameter valuations i.e applying a parameter valuation to $E_{s,t}$ yields a concrete weighted distance. The syntactic elements for the expressions can be derived directly from $\ffunc$; we need syntax for describing minimums of maximums of basic elements $\left|\frac{v}{w}-1\right|$ and $\infty$ where $w$ is rational and $v$ a linear expression in the parameters. Hence, we define the following abstract syntax:
\[
E_1,E_2 ::= \infty \mid \left|\frac{v}{w}-1\right| \mid \minE\{E_1,E_2\} \mid \maxE\{E_1,E_2\} 
\]
where $w \in \Qpos$ and $v$ is on the form $\sum^{n}_{i=0} a_i p_i + b$ s.t $a_i \in \mathbb{N}$ for all $i < n$ and $b \in \Qpos$.
We extend parameter valuations to expressions in the obvious way and denote by $\llbracket E \rrbracket (v)$ the value of $E$ under $v \in \eval$.
Similar to disjunctive normal form for logical formulae, we assume all expression to be a $\minE$ of $\maxE$'s of basic elements $\left|\frac{v}{w}-1\right|$or $\infty$. To convert an expression, note that for any $v \in \eval$
\[
\llbracket \maxE\{\minE\{E_1,E_2\},E_3\} \rrbracket (v) = \llbracket \minE\{\maxE\{E_1,E_3\},\maxE\{E_2,E_3\}\} \rrbracket (v)
\]
The set of expression on this normal form will be denoted by $\expr$. Now the distance functions can be defined as functions associating to a pair of states a parametric expression; $d_\expr: S \times S_\param \to \expr$. The set of syntactic distance function will be denoted by $\dfuncparsyn$ and the syntactic iterator capturing $d_{\min}$ is defined as follows:   
\begin{definition}\label{def:fparsyn}
Let $\ffuncpar : \dfuncparsyn \to \dfuncparsyn$ be defined for any $d_\expr \in \dfuncparsyn$:
\[
	\ffuncpar(d_\expr)(s,t) = \left\{\begin{array}{ll}
														\infty & \text{ if } \lab(s) \neq \lab(t)\\
														\maxE_{s \rightarrow_w s'} \left\{\minE_{t\rightarrow_{v_1}t_1 \cdots \rightarrow_{v_n} t_n} \left\{\maxE\left\{\begin{array}{l}
														\left|\frac{\sum_{i=1}^n v_i}{w}-1\right|,d_\expr(s',t_n),\\
														\maxE \{d_\expr(s,t_i) | \,i < n\}
													 \end{array}
										\right\}
														\right\}\right\} & \text{ o.w}
													  \end{array}\right.
\]
\end{definition}
We will now define an ordering on elements from $\dfuncparsyn$, by first ordering elements from $\expr$.
\begin{definition}\label{def:exporder}
The syntactic ordering $\exporder \subseteq \expr \times \expr$ is defined inductively on the structure of $\expr$:
\begin{alignat*}{2}
\left|\frac{\sum_{i=1}^n a_ip_i + b}{w}-1\right| \exporder \infty \quad\quad &\text{always}\\
\left|\frac{\sum_{i=1}^n a_ip_i + b}{w}-1\right| \exporder \left|\frac{\sum_{i=1}^n a_i'p_i + b'}{w}-1\right| \quad & \text{iff} \quad && \left\{\begin{array}{lc}
																																																																									\forall i.a_i \leq a_i' \land b \leq b'  &\text{ if } \frac{b'}{w},\frac{b}{w} \geq 1\\
																																																																									\forall i.a_i=a_i' \land b=b' \quad &\text{otherwise}
																																																																									\end{array}\right.\\
\maxE\{E_{1.1},\ldots,E_{1.n}\} \exporder \maxE\{E_{2.1},\ldots,E_{2.m}\} \quad & \text{iff} \quad && \forall i. \exists j. E_{1.i} \exporder E_{2.j}\\
\minE\{E_{1.1},\ldots,E_{1.n}\} \exporder \minE\{E_{2.1},\ldots,E_{2.m}\} \quad & \text{iff} \quad && \forall j. \exists i. E_{1.i} \exporder E_{2.j}
\end{alignat*}
\end{definition}
Let $\equiv_\expr$ be the set of pairs of syntactically equivalent expressions. We now extend the ordering to distance functions:
\begin{definition}
The \emph{syntactic} ordering on distance functions $\distsynorder$ is defined for any $\distparsyn^1, \distparsyn^2 \in \dfuncparsyn$:
\[
\distparsyn^1 \distsynorder \distparsyn^2 \quad \text{ iff } \quad \forall s,t \in S. \distparsyn^1(s,t) \exporder \distparsyn^2(s,t).
\]
\end{definition}
As the syntactic expression computed by $\ffuncpar$ for any pair of states $(s,t)$ is merely syntactically representing the functions computed by $\ffunc$ for the same pair of states, the two concepts are closely related. For any expression $d_\expr \in \dfuncparsyn$ let $d \in \dfuncparsem$ be the associated semantic function. Then it is the case that the syntactic ordering of expressions implies the same semantic ordering of the associated semantic functions. Furthermore, iteratively updating the distances as parametric expressions by $\ffuncpar$ is semantically equivalent to computing the distances as functions by $\ffunc$.
\begin{lemma}\label{lem:synsemrelation}
For any $d_\expr^1,d_\expr^2 \in \dfuncparsyn$ and $n \in \mathbb{N}$:
\begin{enumerate}
\item $d_\expr^1 \distsynorder d_\expr^2 \implies d^1 \leq d^2$.
\item $\llbracket \ffuncpar^n(d_\expr^1)(s,t) \rrbracket (v) = \ffunc^n(d^1)(s,t)(v)$.
\end{enumerate}
\end{lemma}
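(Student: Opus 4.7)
The plan is to prove both parts by induction: part (1) by structural induction on the expressions compared by $\exporder$, and part (2) by induction on $n$, using the base case $n=0$ which is essentially the defining relation between a syntactic $d_\expr$ and its associated semantic $d$.

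For part (1), I would follow the four clauses of Definition \ref{def:exporder}. The base case $E \exporder \infty$ is trivial since $\llbracket E \rrbracket(v) \leq \infty$ for every valuation. For the nontrivial base case comparing two $|\cdot/w-1|$ expressions, I would split on the side condition. If $\frac{b'}{w}, \frac{b}{w} \geq 1$, then since every $a_i, a_i' \in \mathbb{N}$ and $v(p_i) \geq 0$, both linear evaluations satisfy $\frac{\sum a_i v(p_i) + b}{w} \geq \frac{b}{w} \geq 1$ (and analogously for the primed side), so the absolute values can be dropped and the coordinatewise inequalities $a_i \leq a_i'$ and $b \leq b'$ transfer directly to $\llbracket E_1\rrbracket(v) \leq \llbracket E_2\rrbracket(v)$. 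In the other subcase the two expressions coincide syntactically, so semantic equality is immediate. For the $\maxE$ case, the rule $\forall i. \exists j. E_{1.i} \exporder E_{2.j}$ combined with the inductive hypothesis yields, for every $v$ and every $i$, some $j$ with $\llbracket E_{1.i}\rrbracket(v) \leq \llbracket E_{2.j}\rrbracket(v) \leq \max_j \llbracket E_{2.j}\rrbracket(v)$, so the left-hand max is bounded by the right-hand one. The $\minE$ case is dual: $\forall j. \exists i$ shows every right-hand element dominates some left-hand one, hence the left-hand min is no larger.

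For part (2), I would proceed by induction on $n$. The base case $n=0$ is exactly the defining relation $d^1(s,t)(v) = \llbracket d_\expr^1(s,t) \rrbracket(v)$ identifying $d^1$ as the semantic counterpart of $d_\expr^1$. For the inductive step I would assume the claim at step $n$ and compare the two sides at step $n+1$. If $\lab(s) \neq \lab(t)$ both evaluate to $\infty$; otherwise Definitions \ref{def:fpar} and \ref{def:fparsyn} share the same outer structure: $\max/\min/\max$ over the transitions out of $s$, the matching sequences from $t$, and the resulting sub-terms. Pushing $\llbracket \cdot \rrbracket(v)$ through $\maxE$, $\minE$, $\maxE$ turns them into ordinary $\max$, $\min$, $\max$ on reals; the innermost sub-terms are either $\left|\frac{\sum_{i=1}^n v_i[v]}{w}-1\right|$, which appears verbatim inside $\ffunc$, or $\ffuncpar^n(d_\expr^1)$ applied to a pair of states and evaluated at $v$, which by the inductive hypothesis equals the corresponding $\ffunc^n(d^1)$ value. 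Recombining yields exactly $\ffunc^{n+1}(d^1)(s,t)(v)$.

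The main obstacle is the base case of part (1): without the side condition $\frac{b'}{w}, \frac{b}{w} \geq 1$ one cannot strip the absolute values, and coefficientwise dominance of the linear forms does not imply dominance of $|\cdot - 1|$ (for instance $|0/w-1| = 1 > |w/w-1| = 0$ when $b=0, b'=w$). Definition \ref{def:exporder} is designed precisely around this, falling back to syntactic equality outside the $\geq 1$ regime, so the proof reduces to verifying that the ordering was chosen to match semantic monotonicity pointwise. Once this is settled, the $\maxE$ and $\minE$ inductive clauses are standard Hoare/Smyth-style powerdomain reasoning and propagate without further subtlety.
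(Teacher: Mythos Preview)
The paper states this lemma without proof, so there is no reference argument to compare against. Your proposal is correct and supplies exactly the kind of routine verification the authors omitted: structural induction along the clauses of Definition~\ref{def:exporder} for part~(1), and induction on the iteration count $n$ together with the commutation of $\llbracket\cdot\rrbracket(v)$ with $\maxE/\minE$ for part~(2). Your handling of the base case in part~(1) is the only place requiring care, and you treat it correctly: when $\frac{b}{w},\frac{b'}{w}\geq 1$ the non-negativity of the coefficients $a_i\in\mathbb{N}$ and of $v(p_i)$ forces both quotients to be at least $1$, so the absolute values open to the right and coordinatewise dominance transfers; otherwise Definition~\ref{def:exporder} collapses to syntactic identity and there is nothing to show.

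One small point you may want to make explicit is that Definition~\ref{def:exporder} only relates basic elements sharing the same denominator $w$ and only relates expressions of matching outer shape ($\minE$ against $\minE$, $\maxE$ against $\maxE$), consistent with the normal form the paper imposes on $\expr$; your structural induction implicitly assumes this, and it is worth saying so. A second minor remark: in part~(2) the inner $\minE$ in $\ffuncpar$ ranges over all matching sequences from $t$, which is in general infinite; your argument that evaluation distributes over $\minE/\maxE$ still goes through at the semantic level (infimum/supremum commute with pointwise evaluation), but the later \autoref{lem:finsequence_par} is what makes the resulting expression genuinely finite and hence an element of $\expr$. This does not affect the correctness of your inductive step, but it clarifies why the paper postpones the finiteness issue to the next lemma.
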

We will now demonstrate an upper bound on the relevant matching transition sequences for the syntactic computations in $\ffuncpar$, given that all loops have at least one strictly positive non-parametric weight. This is similar to assuming no 0-cycles in the weighted case (\autoref{lem:finsequence}).
\begin{lemma}\label{lem:finsequence_par}
Let $\WKS$ be a WKS with state $s \in S$ such that $s \to_w s'$ and let

$\mcal{K_\param}=(S_\param,\atom,\lab_\param,\to^\param)$ be a PWKS with the following property:
\begin{itemize}
	\item There exists a $w_{\min} > 0$ such that for any valuation, the accumulated weight of every loop in $\mcal{K}_\param$ is at least $w_{\min}$ (strongly cost non-zeno).
\end{itemize}
Then for any $t \in S_\param$:
\begin{align*}
\exists N. &\forall \pi = t\rightarrow_{v_1}^\param t_1 \ldots \rightarrow_{v_n}^\param t_n, n \geq N.\\
&\exists \pi^* = t\rightarrow_{u_1}^\param t_1' \ldots \rightarrow_{u_m}^\param t_m', m \leq N.\\
&t_n = t_m' \,\land\, \left|\frac{\sum_{i=1}^m u_i}{w}-1\right| \exporder \left|\frac{\sum_{i=1}^n v_i}{w}-1\right| \land\\
& \{t_1', \ldots, t_{m-1}'\} \subseteq \{t_1,\ldots,t_{n-1}\}
\end{align*}
\end{lemma}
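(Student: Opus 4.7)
The plan is to adapt the proof of \autoref{lem:finsequence} to the parametric setting, using the strongly cost non-zeno assumption in place of the no-0-cycle hypothesis. A key observation is that because a parameter valuation may assign $0$ to every parameter, requiring every loop in $\mcal{K}_\param$ to have accumulated weight at least $w_{\min}$ under \emph{all} valuations forces every loop to carry \emph{non-parametric} weight at least $w_{\min}$.

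Set $s_{w_{\max}} = \max\{w : s \to_w s'\}$ as in the non-parametric case and let $W_{\max}$ denote the maximum rational weight appearing on a transition of $\mcal{K}_\param$. I then choose $N$ as a suitable multiple of $|S_\param|$, large enough relative to $s_{w_{\max}}$, $w_{\min}$, and $W_{\max}$. Given a sequence $\pi$ of length $n \geq N$, I repeatedly locate a repeated state $t_i = t_j$ inside $\pi$ and excise the intervening cycle. Each excision (a) preserves the terminal state $t_n$, (b) leaves only a subset of the intermediate states, (c) strictly decreases the non-parametric part $b$ of the accumulated weight $\sum a_i p_i + b$ by at least $w_{\min}$, and (d) only weakly decreases each parameter coefficient $a_i$. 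Iterating until the length first falls to $\leq N$ yields the candidate $\pi^*$ of length $m \leq N$ whose accumulated weight $\sum a_i'' p_i + b''$ satisfies $a_i'' \leq a_i$ for all $i$ and $b'' \leq b$.

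The main obstacle is verifying the syntactic ordering $\left|\frac{\sum u_i}{w}-1\right| \exporder \left|\frac{\sum v_i}{w}-1\right|$ demanded by \autoref{def:exporder}. Since each excision strictly reduces $b$, the ``equality'' branch of that definition is unavailable, so we must land in the first branch, which additionally requires $b/w \geq 1$ and $b''/w \geq 1$. This is exactly what drives the choice of $N$: picking $N$ so that the original $b$ exceeds $w$ by more than the total non-parametric weight that can be removed (bounded per excision by $|S_\param| \cdot W_{\max}$) preserves the invariant ``current $b \geq w$'' throughout the iteration, guaranteeing both $b''/w \geq 1$ and $b/w \geq 1$ at every step. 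Once this invariant is established, the first clause of \autoref{def:exporder} delivers the required ordering, and the remaining conditions have already been verified above; the rest is a routine adaptation of \autoref{lem:finsequence}.
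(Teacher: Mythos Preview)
Your approach is essentially the same as the paper's: both arguments exploit the strongly cost non-zeno assumption to ensure that the non-parametric constant $b$ of any sufficiently long path exceeds $s_{w_{\max}}\geq w$, which is exactly what is needed to fall into the first clause of \autoref{def:exporder}. Your iterative simple-cycle excision is in fact a cleaner packaging of the paper's reasoning; the paper argues with a single repeatedly exercised loop and derives the explicit bound $N \geq \tfrac{2\, s_{w_{\max}}}{w_{\min}}\,|S_\param| + |S_\param|$, whereas you leave the constant implicit.

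There is one phrasing issue in your invariant argument. You write that one should pick $N$ so that ``the original $b$ exceeds $w$ by more than the total non-parametric weight that can be removed''; but the total removed depends on the length $n$ of the given path, which is unbounded, so no choice of $N$ can guarantee that inequality for all $\pi$. The argument that actually works is local rather than global: at every step before termination the current path still has length $>N$, so by the non-zeno bound its constant part already exceeds $w$ by at least $|S_\param|\cdot W_{\max}$; a single further excision of a simple cycle removes at most $|S_\param|\cdot W_{\max}$, so the final $b''$ remains $\geq w$. With this correction, your argument goes through and matches the paper's (and also establishes $b/w\geq 1$ for the original long path, since $n\geq N$).
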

\begin{proof}
Let the maximum weight out of $s$ be $s_{w_{\max}}$. Any sequence of length $|S_\param|$ must have a loop which, by assumption, cannot have accumulated weight 0 w.r.t any parameter valuation. Thus, the accumulated weight w.r.t any valuation is at least $w_{\min}$. Without loss of generality we assume it to be exactly $w_{\min}$. Exercising the loop a number of times will at some point result in the accumulated weight being greater than $2 \cdot s_{w_{\max}}$ w.r.t any valuation. Let this sequence be $\pi^* = t\rightarrow_{v_1}^\param t_1 \cdots \rightarrow_{v_k}^\param  t_k$ and let $x$ denote the number of times the loop is exercised i.e $x \cdot w_{\min} \geq 2 \cdot s_{w_{\max}}$. Let $\sum_{i=1}^k v_i= \sum_{i=1}^n a_ip_i+b$. Then it is clear that $\frac{b}{s_{w_{\max}}} > 1$. Now consider the corresponding non-looping sequence $\pi_1= t\rightarrow_{u_1}^\param  t_1' \cdots \rightarrow_{u_l}^\param  t_l'$ and let $\sum_{i=1}^l u_i= \sum_{i=1}^n a_i'p_i+b'$. We would like it to be the case that 
\[
\left|\frac{\sum_{i=1}^n a'_ip_i + b'}{w}-1\right| \exporder \left|\frac{\sum_{i=1}^n a_ip_i + b}{w}-1\right|
\] 
but it might be the case that $\frac{b'}{s_{w_{\max}}} < 1$.
Consider a third sequence $\pi = t\rightarrow_{x_1}^\param t_1'' \cdots \rightarrow_{x_m}^\param  t_m''$, being $\pi^*$ modified to exercise the loop one more time and let $\sum_{i=1}^m x_i = \sum_{i=1}^n a_i''p_i+b''$. Now we know that $\frac{b''}{s_{w_{\max}}} > 1$ as $b'' > b'$ and furthermore $\left|\frac{\sum_{i=1}^n a_i'p_i + b'}{w}-1\right| \exporder \left|\frac{\sum_{i=1}^n a_i''p_i + b''}{w}-1\right|, t_k = t_m''$ and  $\{t_1',\ldots,t_k\} \subseteq \{t_1'',\ldots,t_m''\}$. We can now derive $N$. For $\pi^*$ we have the inequality $x \cdot w_{\min} \geq 2 \cdot s_{w_{\max}}$ and by \autoref{lem:finsequence} this leads to the bound $\frac{2 \cdot s_{w_{\max}}}{w_{\min}} \cdot |S_\param|$. As $\pi$ is at most $|S_\param|$ longer than $\pi^*$ we get
\[
N \geq \frac{2 \cdot s_{w_{\max}}}{w_{\min}} \cdot |S_\param| + |S_\param|
\]
\end{proof}
Note that the bound also holds for the semantic function $\ffunc$ as the syntactic ordering implies the semantic ordering (\autoref{lem:synsemrelation}).

We can now limit $\ffuncpar$ to only consider sequences of length $N$, assuming that the PWKS is strongly cost non-zeno. We apply this fact to prove that we will after a finite number of iterations of $\ffuncpar$ have discovered two syntactically equivalent expressions. As syntactic equivalence implies semantic equivalence of the associated functions, we get by \autoref{lem:synsemrelation} that $d_{\min}$ can be computed by repeated application of both $\ffunc$ and $\ffuncpar$ is a finite number of steps. 

\begin{lemma}\label{lem:synequiv}
There exists $n < m$ such that $\ffuncpar^n(\distparsyn^0) \equivexpr \ffuncpar^m(\distparsyn^0)$.
\end{lemma}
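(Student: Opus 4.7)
The plan is to show that the iterates $\ffuncpar^n(\distparsyn^0)$ range over only finitely many equivalence classes modulo $\equivexpr$, so by monotonicity they must eventually repeat. First, I would invoke \autoref{lem:finsequence_par} to fix a bound $N$ on the length of matching transition sequences that need to be considered inside $\ffuncpar$, so that at every iteration $\ffuncpar$ only examines sequences of length at most $N$ in $\mcal{K}_\param$.

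Next, I would argue that the collection of \emph{basic} elements $\left|\frac{\sum_{i=1}^k v_i}{w}-1\right|$ that can appear anywhere in an iterate is finite. The denominator $w$ is drawn from the finite transition relation of $\mcal{K}$, and the numerator $\sum_{i=1}^k v_i$ is the accumulated weight of some path in $\mcal{K}_\param$ of length at most $N$; since $\mcal{K}_\param$ has a finite transition relation, there are only finitely many such paths and hence only finitely many such sums. Let $B$ denote this finite set of basic elements. Because every element of $\expr$ is in the assumed $\minE$-of-$\maxE$ normal form, every expression appearing at some position $(s,t)$ in an iterate corresponds (up to $\equivexpr$) to a set of subsets of $B$, and so at most $2^{2^{|B|}}$ distinct equivalence classes can occur at each position, giving a finite bound $\bigl(2^{2^{|B|}}\bigr)^{|S|\cdot|S_\param|}$ on the number of equivalence classes of full distance functions in $\dfuncparsyn$ reachable by iteration.

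Then I would establish monotonicity of $\ffuncpar$ with respect to $\distsynorder$ by structural induction on \autoref{def:exporder}, using that $\maxE$ and $\minE$ are monotone in each argument under $\exporder$. Starting from the bottom element $\distparsyn^0$, the sequence $\{\ffuncpar^n(\distparsyn^0)\}_{n\in\mathbb{N}}$ is therefore non-decreasing in $\distsynorder$. Since it lives in a set with only finitely many equivalence classes under $\equivexpr$, the pigeonhole principle forces two iterates $\ffuncpar^n(\distparsyn^0)$ and $\ffuncpar^m(\distparsyn^0)$ with $n<m$ to lie in the same equivalence class, giving the desired conclusion.

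The main obstacle I expect is justifying that the normal form (and the bound $N$) is genuinely preserved along the iteration: $\ffuncpar$ as written produces a $\maxE$-of-$\minE$-of-$\maxE$ structure, so each application must be followed by a normalization step (using the distributivity identity given just before the definition of $\expr$) to remain in $\expr$. I would verify that this normalization does not introduce new basic elements outside $B$, and that it preserves $\equivexpr$, so the finiteness count above remains valid. Beyond that, the rest of the argument reduces to the standard fact that a monotone sequence in a set with finitely many equivalence classes must stabilize.
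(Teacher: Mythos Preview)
Your proposal is correct and follows essentially the same approach as the paper: bound the matching sequences via \autoref{lem:finsequence_par}, deduce that only finitely many basic elements occur, observe that $\minE$/$\maxE$ expressions are, up to $\equivexpr$, sets of sets of these basic elements, and apply pigeonhole. Your treatment is in fact more explicit than the paper's, both in the counting argument and in flagging the normalization step. One remark: the monotonicity of $\ffuncpar$ with respect to $\distsynorder$ is not needed here---the lemma only asserts the existence of $n<m$ with $\ffuncpar^n(\distparsyn^0)\equivexpr\ffuncpar^m(\distparsyn^0)$, which is pure pigeonhole on a finite set of equivalence classes, so you can drop that paragraph without loss.
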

\begin{proof}
Let
\[
\ffuncpar^n(\distparsyn^0)(s,t) = \minE\left\{\maxE\left\{E_{1.1},\ldots,E_{1.k}\right\},\ldots,\maxE\left\{E_{m.1},\ldots,E_{m.n}\right\}\right\}.
\]
From the definition of $\equivexpr$ we directly get $\maxE$ and $\minE$ expressions behave like sets. Duplicates can be ignored i.e $\maxE\{E_1,E_2,E_2\} \equivexpr \maxE\{E_1,E_2\}$, $\minE\{\maxE\{E_1,E_2\},\maxE\{E_1,E_2\}\} \equivexpr \minE\{\maxE\{E_1,E_2\}\}$ and the ordering of elements does not matter; $\maxE\left\{E_1,E_2\right\} \equivexpr \maxE\left\{E_2,E_1\right\}$. By \autoref{lem:finsequence_par} we can limit the transition sequences to length $N$. This implies that only a finite number of basic elements $\left|\frac{v}{w}-1\right|$ exist when iteratively applying $\ffuncpar$. As one can only construct a finite number of unique sets from a finite set of elements, the number of syntactically unique expressions (w.r.t $\equivexpr$) is finite. Therefore, there must exist a $m > n$ such that $\ffuncpar^n(\distparsyn^0) \equivexpr \ffuncpar^m(\distparsyn^0)$.
\end{proof}
We can now demonstrate computability of the distance.
\begin{theorem}[Computability]
There exists a natural number $n$ such that for all states $s \in S, t \in S_\param$ and all valuations $v \in \eval$
\[
\llbracket \ffunc^n(\distparsyn^0)(s,t) \rrbracket (v) = d_{\min}(s,t)(v).
\]
\end{theorem}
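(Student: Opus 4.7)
The plan is to combine \autoref{lem:synequiv} with \autoref{lem:synsemrelation} and a Kleene-style argument for the semantic iterator $\ffunc$. (I read the occurrence of $\ffunc^n$ in the statement as $\ffuncpar^n$, since $\distparsyn^0 \in \dfuncparsyn$ is syntactic and only $\ffuncpar$ is defined on such inputs.) The key observation is that \autoref{lem:synequiv} delivers syntactic stabilization of the chain $\ffuncpar^k(\distparsyn^0)$ in finitely many steps, and \autoref{lem:synsemrelation} transfers this stabilization to the semantic side, after which a standard lattice argument identifies the stable value with $d_{\min}$.

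Concretely, I would first invoke \autoref{lem:synequiv} to obtain indices $n < m$ with $\ffuncpar^n(\distparsyn^0) \equivexpr \ffuncpar^m(\distparsyn^0)$. Since syntactically equivalent expressions have identical semantics under every valuation, \autoref{lem:synsemrelation}(2) gives $\ffunc^n(d^0) = \ffunc^m(d^0)$ as elements of $\dfuncparsem$, where $d^0$ denotes the semantic counterpart of $\distparsyn^0$. The function $\ffunc$ is monotone on the complete lattice $(\dfuncparsem,\leq)$, and since $d^0$ is the bottom element the Kleene chain $d^0 \leq \ffunc(d^0) \leq \ffunc^2(d^0) \leq \cdots$ is non-decreasing. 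An equality between two points of a non-decreasing chain forces every intermediate term to be equal; in particular $\ffunc^n(d^0) = \ffunc^{n+1}(d^0)$, so $\ffunc^n(d^0)$ is itself a fixed point of $\ffunc$. Monotonicity together with $d^0 \leq d_{\min}$ yields $\ffunc^n(d^0) \leq d_{\min}$, and since $d_{\min}$ is the \emph{least} fixed point we conclude $\ffunc^n(d^0) = d_{\min}$. A final application of \autoref{lem:synsemrelation}(2) then delivers $\llbracket \ffuncpar^n(\distparsyn^0)(s,t) \rrbracket(v) = d_{\min}(s,t)(v)$ for all $s$, $t$ and $v$.

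The main obstacle is the syntactic-to-semantic bridge: I need that $\equivexpr$ implies pointwise equality of the interpretations $\llbracket\cdot\rrbracket$ under every valuation. This is essentially the soundness half of \autoref{lem:synsemrelation}(1); if it is not already granted there, a short induction on the normal-form grammar of $\expr$ (using that $\minE$ and $\maxE$ are interpreted as the numerical $\min$ and $\max$ of the sub-interpretations) suffices to close the gap. Beyond this, all remaining steps are routine: monotonicity of $\ffunc$ is built into its definition via componentwise $\min$ and $\max$, and convergence of Kleene iteration at the first stabilization index is a standard consequence of monotonicity on a complete lattice starting from the bottom.
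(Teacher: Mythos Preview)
Your proposal is correct and follows essentially the same route as the paper: invoke \autoref{lem:synequiv} for syntactic stabilization, transfer to the semantic side via \autoref{lem:synsemrelation}, and conclude that the stabilized iterate is the least fixed point of $\ffunc$. Your Kleene-style argument for the last step (iterates from bottom stay below $d_{\min}$ by monotonicity, so any fixed point reached along the chain must equal $d_{\min}$) is in fact slightly cleaner than the paper's contradiction argument for the same point.
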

\begin{proof}
By \autoref{lem:synequiv}, there exists $n < m$ such that $\ffuncpar^n(\distparsyn^0) \equivexpr \ffuncpar^m(\distparsyn^0)$. By \autoref{lem:synsemrelation} we thus get semantic equivalence $\ffunc^n(d^0) \equiv \ffunc^m(d^0)$ and as $\ffunc$ is monotonic on $(\dfuncparsem, \leq)$ we have for all $i$ s.t $n \leq i \leq m$ that $\ffunc^i(d^0) \equiv \ffunc^m(d^0)$. Thus, $\ffunc^n(d^0)$ is a fixed point found after a finite number of steps and is captured syntactically by $\ffuncpar^n(\distparsyn^0)$. The check for equivalence ($\equivexpr$) can therefore be used to capture a semantic fixed point syntactically. The fixed point must also be the least fixed point. To see this, suppose towards a contradiction that it is not the least fixed point. Then there exists a $k < n$ such that $\ffunc^k(d^0) = d_{\min}$ but by the fixed point property of $d_{\min}$ and the monotonicity of $\ffunc$ we immediately get $\ffunc^k(d^0) \equiv \ffunc^n(d^0)$ which contradicts our assumption that $\ffunc^n(d^0)$ is not the least fixed point of $\ffunc$.
\end{proof}
By computing the syntactic fixed point we thus get a syntactic expression
$\ffuncpar^n(\distparsyn^0)(s,t) = E_{s,t}$ for each pair of states $s,t$ such that the solution set to $E_{s,t} \leq \veps$ characterizes the set of ``good'' parameter valuations.
\begin{example}
Consider the WKS and PWKS from \autoref{fig:PWKSex}. To compute $E_{s,t}$, let $\distparsyn^i(s,t) = \ffuncpar^i(\distparsyn^0)(s,t)$. We now show how the distance from $s$ to $t$ is updated after each iteration.
\begin{alignat*}{2}
&\distparsyn^1(s,t) &&= \maxE\left\{\begin{array}{ll}
				                     \maxE\left\{\left|\frac{1}{1}-1\right|,\distparsyn^0(s_1,t_2)\right\},\\[0.15cm]
														 \maxE\left\{\left|\frac{3}{2}-1\right|,\distparsyn^0(s_1,t_2),d^0_\expr(s,t_2)\right\}
														 \end{array}\right\}\\
&\distparsyn^2(s,t) &&= \maxE\left\{\begin{array}{ll}
				                     \maxE\left\{\left|\frac{1}{1}-1\right|,0\right\},\\[0.15cm]
														 \maxE\left\{\left|\frac{3}{2}-1\right|,0,\frac{1}{2}\right\}
														 \end{array}\right\}\\
&\distparsyn^3(s,t) &&= \maxE\left\{\begin{array}{l}
															\frac{1}{2},\left|\frac{p}{1}-1\right|,\\
															\minE\left\{\left|\frac{p}{5}-1\right|,\left|\frac{p+2}{5}-1\right|,\left|\frac{p+4}{5}-1\right|\right\},\\[0.15cm]
															\minE\left\{\left|\frac{p}{3}-1\right|,\left|\frac{p+2}{3}-1\right|\right\}
															\end{array}\right\}\\
&\distparsyn^4(s,t) && = \distparsyn^3(s,t)
\end{alignat*}
We immediately see that any solution to $E_{s,t} \leq \veps$ is bounded from below by $\frac{1}{2}$. This implies that there exists no valuation $v \in \eval$ such that $s \simeps t[v]$ for $\veps < \frac{1}{2}$. If we consider the valuation $v_{\min}(p) = 1$ we get that $\llbracket E_{s,t} \rrbracket (v_{\min}) = \frac{1}{2}$  i.e $v_{\min}$ is the valuation that induces the minimal distance $d(s,t[v_{\min}]) = \frac{1}{2}$.
\end{example}
\section{Conclusion and Future Work}\label{sec:conc_future}
We have characterized the distance from \cite{WCTL_logic} between weighted Kripke structures (WKS) as a least fixed point. The distance between any pair of states can thus be computed by first assuming the distance between any pair to be 0 and then applying a step-wise refinement of the distance. The computability of the distance is guaranteed as a finite number of the (potentially) infinite transition sequences of the system is sufficient. This we proved by demonstrating an upper bound on the relevant sequences. We furthermore lifted the distance to parametric WKS (PWKS), where transition weights can be parametric. The parameters can be used to abstract multiple configurations of the same system as one parametric system. In this case the distance is from a WKS to a PWKS and is concretely a parametric expression that one can evaluate to get an exact distance from the WKS to a specific WKS instance of the PWKS.  The question is then which configuration (parameter valuation) is ``best'' i.e minimizes the induced distance. For computability we again demonstrate an upper bound on the length of relevant distances. To do this we assume all cycles to be cost non-zeno i.e any loop must include a transition with a positive rational weight.

For future work, the actual complexity of computing the distance is open. From \cite{WCTL_logic} we know that checking whether the distance is 0 is NP-complete but the general complexity of checking whether the distance is less than some $\veps \in \Rpos$ is open. One could also investigate whether the distance has a polynomial approximation scheme.
\bibliographystyle{eptcs}
\bibliography{refs}
\end{document}